\newtheorem{lemma}{Lemma}
\newcommand*\Laplace{\mathop{}\!\mathbin\bigtriangleup}
\newcommand{\dr}{{\rm d}{\bf r}}
\begin{document}

\title{Dynamical representations of constrained multicomponent nonlinear Schrödinger equations in arbitrary dimensions}

\author{M. Gulliksson* and M. Ögren**}

\maketitle

\bigskip
  \footnotesize

*(Corresponding author)  \textsc{School of Science and Technology, Örebro University, 701 82 Örebro, Sweden}\par\nopagebreak
  \textit{E-mail address:} \texttt{marten.gulliksson@oru.se}

  \medskip

  **\textsc{School of Science and Technology, Örebro University, 701 82 Örebro, Sweden, Hellenic Mediterranean University, P.O. Box 1939, GR-71004, Heraklion, Greece.}\par\nopagebreak
  \textit{E-mail address:} \texttt{magnus.ogren@oru.se}

  \medskip

\begin{abstract}
   We present new approaches for solving constrained multicomponent nonlinear Schrödinger equations in arbitrary dimensions. The idea is to introduce an artificial time and solve an extended damped second order dynamic system whose stationary solution is the solution to the time-independent nonlinear Schrödinger equation. 
Constraints are often considered by projection onto the constraint set,
here we include them explicitly into the dynamical system. 
We show the applicability and efficiency of the methods on examples of relevance in modern physics applications.
\end{abstract}

\section{Introduction}

We will describe and numerically test a family of  methods for finding stationary solutions to   coupled multicomponent nonlinear Schrödinger equations in arbitrary number of spatial dimensions using different representations of damped dynamical systems.

The nonlinear Schrödinger equation (NLSE) \cite{Fibich2015} is widely used today as a model, for example in nonlinear optics \cite{OgrenEtAlSolitons2017}; in super-conductivity, modeled with the related Ginzburg-Landau equation \cite{SorensenEtAl2017}; in models for dark matter \cite{ChavanisEtAl2011}; and the NLSEs have also been used in the description of water waves \cite{RozemannEtAl2020}, as well as for modeling the occurrence of rogue surface waves at sea \cite{SystheEtAl2008}. 

Here we present examples of constrained vector-NLSEs \cite{GeneralNsolitonFeng2014}, often called coupled Gross-Pitaevskii equations in the settings of a mean-field description of rotating bosonic atoms in different components. This is currently an active area of research in experiments on ultra-cold atomic gases \cite{BeattieEtAl2013,LannigEtAl2020} in need of tools for theoretical investigations.

We will  limit the presentation to the most common form of NLSEs, i.e. with cubic nonlinearities, but the nonlinear terms can be more general functionals of the densities of the components, such is used in modeling, for example of Tonks gases \cite{OgrenEtAlSolitary2005}, or superfluid Fermi-Bose mixtures \cite{OgrenEtAlRotational2020}, to mention just a few realizations of different nonlinear couplings. The readers can straightforwardly modify the presented constraints and dynamic equations to represent future models of interest.

There are other numerical ways to solve for  the stationary solution of the
NLSE numerically. The so called imaginary time dependent SE is common \cite{AntoineDuboscg2014}.
This means that after discretization in space (e.g., by finite differences, finite elements, or
spectral decomposition)  solve a first order damped time dependent equation numerically,
see \cite{SmyrlisEtAl2004}. Sometimes these methods are called steepest descent methods
When the NLSE is independent of time  it is,
after discretization in space, equivalent to a  finite dimensional minimization
problem, with nonlinear constraints. Such minimization problems can generally be
solved by a variety of numerical methods including gradient descent methods, (Quasi-)
Newton methods, machine learning techniques etc. \cite{NoceWrig06_Numerical_Optimization,SraEtAl2012}.

However, as demonstrated in this article, the extended second order damped dynamical systems presented have many benefits, such as ease of implementation and low computational complexity.

We begin in the next section with the necessary notation and presentation of the problem formulated as a minimization problem with constraints. We there derive the necessary conditions for a solution. In Sec. \ref{sec:DFPM} we introduce the damped dynamical system to be used to attain a stationary solution that solves our original constrained minimization  problem. This approach to obtain the solution is an extension of the Dynamical Functional Particle Method (DFPM), see \cite{Gulliksson_book_chapter_2019}, able to handle nonlinear constraints. Our version of DFPM involves finding the Lagrange parameters and we derive the linear equations determining those. We apply DFPM to the NLSE in one dimension in detail in Sec. \ref{sec:one_dim_multicomp} based on the general results in the earlier section. We show the applicability, efficiency, and stability of DFPM in numerical tests on both one- and two-dimensional NLSEs in Sec. \ref{sec:Numerical}. Finally, we summarize and discuss our results in Sec. 
\ref{sec:Conclusions}.

\section{The multicomponent nonlinear Schrödinger Equation}

In this section we present the general $n$-component Nonlinear Schrödinger  equation (NLSE) and constraints. Further, we define the corresponding total energy and Lagrange functional to be used in later sections.

Let $\mathcal{H}$ denote a Hilbert space of complex valued vector functions $\vec{\psi}: \mathbb{R}^m\longrightarrow \mathbb{C}^n$, $m, n \in \mathbb{N}$. We use the standard notation in physics for the position vector as ${\bf r}$. The space $\mathcal{H}$ is equipped with the inner product 
\begin{equation}
\left\langle \vec{\psi}\middle| \vec{\phi} \right\rangle = 
\int \vec{\psi}^{\dagger} \,  \vec{\phi}\, \dr,
\end{equation}
and the norm
\begin{equation}
\|  \vec{\psi} \|  = 
\sqrt{
\int \vec{\psi}^{\dagger} \,  \vec{\psi}\, {\rm d}{\bf r},
}
\end{equation}
where $\dagger$ is transpose conjugation. 
We  assume that the functions in $\mathcal{H}$ have sufficient smoothness  and satisfy either periodic boundary conditions or $\lim_{\|{\bf r} \| \rightarrow \infty} \vec{\psi}({\bf r})=0$. 

The total energy is defined as  the convex functional $\widetilde{E} : \, \mathcal{H} \longrightarrow \mathbb R$
 with
\begin{equation}
\widetilde{E}(\vec{\psi})=
\int 
\sum_{i=1}^m \nabla_{i} \vec{\psi}^{\dagger} W \nabla_{i} \vec{\psi}+
\vec{\psi}^{\dagger} V(x) \vec{\psi} + 
\vec{\psi}^{\dagger} 
\Gamma(\vec{\psi})\vec{{\psi}} \, \dr, 
\label{eq:energy_functional}
\end{equation}
where  $W = \text{diag}(1/(2M_1), \ldots, 1/(2M_n))$ represents different masses, and
$V({\bf r}) = \text{diag}(V_1({\bf r}), \ldots, V_n({\bf r}))$ so we might have different external potentials for different components. The matrix
\begin{equation}
\Gamma(\vec{\psi})=\left[\begin{array}{cccc}
\gamma_{11}\left|\psi_{1}\right|^{2} & \gamma_{12}\psi_{1}\bar{\psi}_{2} & \dots & \gamma_{1n}\psi_{1}\bar{\psi}_{n}\\
\gamma_{12}\psi_{2}\bar{\psi}_{1} & \gamma_{22}\left|\psi_{2}\right|^{2} & \dots & \gamma_{2n}\psi_{2}\bar{\psi}_{n}\\
\vdots & \vdots & \ddots & \vdots\\
\gamma_{1n}\psi_{n}\bar{\psi}_{1} & \gamma_{2n}\psi_{n}\bar{\psi}_{2} & \dots & \gamma_{nn}\left|\psi_{n}\right|^{2}
\end{array}\right], \gamma_{ij} \in \mathbb R,
\end{equation}
specifies the intra- and inter-component interactions, which is repulsive if $\gamma_{ij} >0$. 
Further, we assume that there are additional global  constraints $\widetilde{G}_j : \, \mathcal{H} \longrightarrow \mathbb C$ on the form 
\begin{equation}
\label{eq:genGdef}
\widetilde{G}_{j}(\vec{\psi})=
c_{j}-\int\vec{{\psi}}^{\dagger}K_{j}\vec{{\psi}}d{\bf r}=0,\ 
j = 1, \ldots, n_c , \ c_j \in \mathbb C,
\end{equation}
where $K_j$ is a  matrix, to be specified later, not depending on $\vec{\psi}$.
Further, we define the constraint set 
$$
\widetilde{\mathcal{G}} = \left\{ \vec{\psi} : 
\widetilde{G}_j(\vec{\psi}) = 0, j=1, \ldots , n_c  \right\} .
$$
Denoting the  Lagrange multipliers as  $\lambda_{j} \in \mathbb C, j=1, \ldots, n_c$,
the Lagrange functional (extended energy) can be written as
\begin{equation}
\label{eq:Idef}
\widetilde{I}(\vec{\psi},\vec{{\lambda}})=
\widetilde{E}(\vec{\psi})+
\sum_{j}\lambda_{j}\widetilde{G}_{j}(\vec{\psi}).
\end{equation}
It is well known  \cite{ZeidlerIII} that if $\vec{\psi}^*$ is a local minimizer of 
\begin{equation}
\min_{\vec{{\psi}}\in \widetilde{\mathcal{G}}} \widetilde{E}\left(\vec{{\psi}}\right),
\end{equation}
then the first order necessary condition 
\begin{equation}
\label{eq:dIdphi0}
\dfrac{\delta \widetilde{I}}{\delta\vec{{\psi}}^{\dagger}}=
\dfrac{\delta \widetilde{E}}{\delta\vec{{\psi}}^{\dagger}}+
\sum_{j}\lambda_{j}\dfrac{\delta \widetilde{G}_{j}}{\delta\vec{{\psi}}^{\dagger}}=0
\end{equation}
is satisfied at $\vec{\psi}^*$. 

If we define the projection of $\vec{\xi}\in \mathcal{H}$ onto the tangent space of the constraint set $\widetilde{\mathcal{G}}$  as
\begin{equation}
\label{eq:PDdef}
P_{\widetilde{\mathcal{G}}}^{\perp}(\vec{\xi}) = \vec{\xi}-\sum_{j=1}^{n_c}
\alpha_j (\vec{\xi})
\dfrac{\delta \widetilde{G}_{j}}{\delta\vec{{\psi}}^{\dagger}},
\end{equation}
where $\alpha_j $ are given by the solution of the linear system 
\begin{equation}
    \label{eq:Axib}
T\vec{\alpha}=\vec{d},
\end{equation}
where
\begin{equation}
\label{eq:calcPAb}
t_{ij} = 
\left\langle   
\dfrac{\delta \widetilde{G}_{i}}{\delta\vec{{\psi}}^{\dagger}}\,
\middle|
\dfrac{\delta \widetilde{G}_{j}}{\delta\vec{{\psi}}^{\dagger}}
\right\rangle, \,
d_{i} = 
\left\langle   
\dfrac{\delta \widetilde{G}_{i}}{\delta\vec{{\psi}}^{\dagger}}  \,  \,
\middle| \vec{\xi}
\right\rangle ,
\end{equation}
then  
$$
P_{\widetilde{\mathcal{G}}}^{\perp}\left(
\sum_{j}\lambda_{j}\dfrac{\delta \widetilde{G}_{j}}{\delta\vec{{\psi}}^{\dagger}}
\right) =0,
$$
and
(\ref{eq:dIdphi0}) gives
the a related necessary condition
for a minimum
\begin{equation}
\label{eq:PdEdPhi}
P_{\widetilde{\mathcal{G}}}^{\perp}\left(\dfrac{\delta \widetilde{E}}{\delta\vec{{\psi}}^{\dagger}}
\right)=0.
\end{equation}
Note that in (\ref{eq:PdEdPhi}) the Lagrange parameters are not present. This fact will be used later.

To be more precise we get from (\ref{eq:energy_functional}) and (\ref{eq:genGdef}) 
by using partial integration for the first term that
$$
\dfrac{\delta \widetilde{E}(\vec{{\psi}})}{\delta\vec{{\psi}}^{\dagger}} = 
-W\Laplace \vec{\psi} + V\vec{\psi}  +2\pi\Gamma\left(\vec{{\psi}}\right)\vec{{\psi}},\ 
\frac{\delta \widetilde{G}_j(\vec{{\psi}})}{\delta\vec{\psi}^{\dagger}} = -
K_j\vec{\psi},
$$
and we see that (\ref{eq:dIdphi0}) is equivalent to the partial differential equation
\begin{equation}
-W\Laplace \vec{\psi} + V\vec{\psi}  +2\pi\Gamma\left(\vec{{\psi}}\right)\vec{{\psi}} 
-\sum_{j}\lambda_{j}K_{j}\vec{{\psi}} =0,
\end{equation}
where we define the Laplace operator acting on a vector  to be
$$
\Laplace \vec{\psi} = 
\left[\begin{array}{c}
\Laplace \psi_1\\
\vdots\\
\Laplace \psi_n
\end{array}\right].
$$

\section{General formulation of DFPM with constraints}
\label{sec:DFPM}

The first step in the Dynamical Functional Particle Method \cite{Gulliksson_book_chapter_2019}, DFPM, is  to  introduce an artificial time $\tau$ making all the functions and functionals in the preceeding section depend on $\tau$. Thus, $\vec{\psi}({\bf r})$ become, say, $\vec{{\Psi}}({\bf r},\tau)$ and we drop all the tildes in the notation used above to indicate the new dependence on $\tau$, e.g., $\widetilde{E}(\vec{\psi})$ will become $E(\vec{\Psi})$ . Then we formulate the damped dynamical system
\begin{equation}
\label{eq:dampeddIdPsi}
\dfrac{\partial^2 \vec{\Psi}}{\partial \tau^2}+
\eta\dfrac{\partial \vec{\Psi}}{\partial \tau}=
-\dfrac{\delta I(\vec{{\Psi}},\vec{{\lambda}})}{\delta\vec{{\Psi}}^{\dagger}},
\end{equation}
with  initial conditions $\vec{\Psi}_0=\vec{{\Psi}}({\bf r},0), \vec{\Phi}_0=\frac{\partial \vec{\Psi}}{\partial \tau} ({\bf r},0)$, and where
$\eta >0$ is a damping parameter.

We  need to satisfy the constraints at the stationary point. 
This can be done either by projection onto the constraint set or by formulating additional damped dynamical systems for the constraints, see below. Assume that the number of constraints satisfied by projection is $n_P$ then
$G_j(\vec{{\Psi}}(\tau)) = 0, j=1, \ldots , n_P\leq n_c$
and we have the constraint set 
$\mathcal{D}(\tau) =$ $\left\{ \vec{\Psi}(\tau) : G_j(\vec{\Psi}(\tau)) = 0,j=1, \ldots, n_P \right\}$. The functional derivatives in (\ref{eq:dIdphi0}) should be taken on the constraint set $\mathcal{D}(\tau)$, i.e., 
\begin{equation}
\label{eq:ProjdIdphi}
\dfrac{\delta I(\vec{{\Psi}},\vec{{\lambda}})}{\delta\vec{{\Psi}}^{\dagger}} \left. \right|_{\mathcal{D}} =
P_{\mathcal{D}}^{\perp}
\left(
\dfrac{\delta E}{\delta\vec{{\Psi}}^{\dagger}}+\sum_{j}\lambda_{j}\dfrac{\delta G_{j}}{\delta\vec{{\Psi}}^{\dagger}}
\right),
\end{equation}
where $P_{\mathcal{D}}^{\perp}$ is the (orthogonal) projection on the tangent space of the constraint set $\mathcal{D}(\tau)$ defined 
as in 
(\ref{eq:PDdef}). The dynamical system corresponding to (\ref{eq:dampeddIdPsi}) becomes
\begin{equation}
\dfrac{\partial^2 \vec{\Psi}}{\partial \tau^2}+
\eta\dfrac{\partial \vec{\Psi}}{\partial \tau}=-P_{\mathcal{D}}^{\perp}
\left(
\dfrac{\delta E}{\delta\vec{{\Psi}}^{\dagger}}+\sum_{j}\lambda_{j}\dfrac{\delta G_{j}}{\delta\vec{{\Psi}}^{\dagger}}
\right),
\end{equation}
or since $P_{\mathcal{D}}^{\perp}
\left(
\frac{\delta G_{j}}{\delta\vec{{\Psi}}^{\dagger}}
\right)=0, j=1, \ldots, n_P$ 
\begin{equation}
\label{eq:DFPMproj}
\dfrac{\partial^2 \vec{\Psi}}{\partial \tau^2}+
\eta\dfrac{\partial \vec{\Psi}}{\partial \tau}=
-P_{\mathcal{D}}^{\perp}
\left(
\dfrac{\delta E}{\delta\vec{{\Psi}}^{\dagger}}
\right)-
\sum_{j=n_P+1}^{n_c}\lambda_{j}
P_{\mathcal{D}}^{\perp}\left(
\dfrac{\delta G_{j}}{\delta\vec{{\Psi}}^{\dagger}}
\right).
\end{equation}
The remaining constraints are forced to be satisfied at the stationary solution by solving additional damped dynamical systems
\begin{equation}
\label{eq:Gdynm}
    \begin{array}{c}
\dfrac{\partial^2 G_{n_P+1}}
{\partial \tau^2}+
\eta\dfrac{\partial G_{n_P+1}}
{\partial \tau}=-k_{n_P+1}G_{n_P+1},\\[4mm]
\dfrac{\partial^2 G_{n_P+2}}
{\partial \tau^2}+
\eta\dfrac{\partial G_{n_P+2}}
{\partial \tau}=-k_{n_P+2}G_{n_P+2},\\
\vdots\\
\dfrac{\partial^2 G_{n_c}}
{\partial \tau^2}+
\eta\dfrac{\partial G_{n_c}}
{\partial \tau}=-k_{n_c}G_{n_c} ,
\end{array}
\end{equation}
where $k_j>0$ can be chosen in order to optimize the numerical performance of the method. Note that by definition these equations ensure that $G_j \rightarrow 0, j = n_P+1, \ldots , n_c$ when $\tau\rightarrow \infty$. 

The unknown  Lagrange parameters $\lambda_j, j= n_P+1, \ldots, n_c$ are determined by substituting the known $G_j, j = n_P+1, \ldots , n_c$ into (\ref{eq:Gdynm}) and then using (\ref{eq:DFPMproj}) as we will show in detail in Sec. \ref{sec:Lagrange}.

Finally, note that it is possible to have no projection, $n_P=0$, as well as no additional damped systems for the constraints, $n_P=n_c$. 

\subsection{Finding the Lagrange parameters}
\label{sec:Lagrange}

First assume  that the first $n_P$ constraints
are satisfied. Further, for a simpler notation, we use inner products (bra-ket) for the integrals. We will use (\ref{eq:Gdynm}) to get the equations for the Lagrange parameters  $\lambda_j, j= n_P+1, \ldots , n_c$ in (\ref{eq:DFPMproj}), i.e.,  we need the first and second order derivatives of $G_j$ w.r.t. $\tau$. 
For notational convenience we drop the brackets for the projection, i.e., $P_{\mathcal{D}}^{\perp}(\vec{\xi})=P_{\mathcal{D}}^{\perp}\vec{\xi}$.

From (\ref{eq:genGdef})
we have that
\begin{equation}
\dfrac{\partial G_j}{\partial \tau}=-\left\langle   \dfrac{\partial \vec{\Psi} }{\partial \tau}\middle| K_j\vec{{\Psi}}\right\rangle
-\left\langle \vec{\Psi}\middle|K_{j}\dfrac{\partial \vec{\Psi}}{\partial \tau} \right\rangle,
\label{eq:dGjdtau}
\end{equation}
and 
\begin{equation}
\label{eq:dddGGG}
\dfrac{\partial^2 G_j}{\partial \tau^2}=
-\left\langle   \dfrac{\partial^2 \vec{\Psi} }{\partial \tau^2}\middle| K_j\vec{{\Psi}}\right\rangle
-2\left\langle   \dfrac{\partial \vec{\Psi} }{\partial \tau}\middle| K_j\dfrac{\partial \vec{\Psi} }{\partial \tau}\right\rangle
-\left\langle \vec{{\Psi}} \middle| K_j  \dfrac{\partial^2 \vec{\Psi} }{\partial \tau^2}\right\rangle
\end{equation}
and with  (\ref{eq:dGjdtau}) and (\ref{eq:dddGGG}) we get 
\begin{equation}
\label{eq:d2Gdg}
\begin{array}{l}
\dfrac{\partial^2 G_j}{\partial \tau^2} + \eta \dfrac{\partial G_j}{\partial \tau} =
-\left\langle   \dfrac{\partial^2 \vec{\Psi} }{\partial \tau^2}\middle| K_j\vec{{\Psi}}\right\rangle
-2\left\langle   \dfrac{\partial \vec{\Psi} }{\partial \tau}\middle| K_j\dfrac{\partial \vec{\Psi} }{\partial \tau}\right\rangle
-\left\langle \vec{{\Psi}} \middle| K_j  \dfrac{\partial^2  \vec{\Psi}}{\partial \tau^2}\right\rangle\\
 -\eta\left\langle   \dfrac{\partial \vec{\Psi} }{\partial \tau}\middle| K_j\vec{{\Psi}}\right\rangle
-\eta\left\langle \vec{\Psi}\middle|K_{j}\dfrac{\partial \vec{\Psi}}{\partial \tau} \right\rangle = \\
 -\left\langle
 \dfrac{\partial^2 \vec{\Psi}}{\partial \tau^2}+
 \eta\dfrac{\partial \vec{\Psi} }{\partial \tau}\middle|
 K_{j}\vec{\Psi}\right\rangle -
2\left\langle   \dfrac{\partial \vec{\Psi} }{\partial \tau}\middle| K_j\dfrac{\partial \vec{\Psi} }{\partial \tau}\right\rangle
-\left\langle \vec{\Psi}\middle| K_j 
\left(
 \dfrac{\partial^2 \vec{\Psi}}{\partial \tau^2}+
 \eta\dfrac{\partial \vec{\Psi} }{\partial \tau}
 \right)
 \right\rangle .
\end{array}
\end{equation}
With  (\ref{eq:DFPMproj}) inserted into (\ref{eq:d2Gdg})
we get 
\begin{equation}
\label{eq:damp1}
\begin{array}{l}
\dfrac{\partial^2 G_j}{\partial \tau^2} + \eta \dfrac{\partial G_j}{\partial \tau} =
 -\left\langle  
 P_{\mathcal{D}}^{\perp}\dfrac{\delta E}{\delta\vec{{\Psi}}^{\dagger}}-\sum_{k}\lambda_{k}P_{\mathcal{D}}^{\perp}K_{k}\vec{{\Psi}}
 \middle|
 K_{j}\vec{{\Psi}}
 \right\rangle -
2\left\langle  
\dfrac{\partial \vec{\Psi}}{\partial \tau} \middle| K_{j}\dfrac{\partial \vec{\Psi}}{\partial \tau}
\right\rangle\\
-\left\langle \vec{{\Psi}} \middle|
K_{j} ( P_{\mathcal{D}}^{\perp}\dfrac{\delta E}{\delta\vec{{\Psi}}^{\dagger}}-\sum_{k}\lambda_{k}
P_{\mathcal{D}}^{\perp}K_{k}\vec{{\Psi}} )\right\rangle .
\end{array}
\end{equation}
We want to solve for the Lagrange parameters so we expand the bra-ket's in (\ref{eq:damp1})
\begin{equation}
\begin{array}{l}
\dfrac{\partial^2 G_j}{\partial \tau^2} + \eta \dfrac{\partial G_j}{\partial \tau} =
- \left\langle
 P_{\mathcal{D}}^{\perp}\dfrac{\delta E}{\delta\vec{{\Psi}}^{\dagger}} 
 |
 K_{j}\vec{{\Psi}}
  \right\rangle
  +\sum_{k}\lambda_{k}
\left\langle
P_{\mathcal{D}}^{\perp} K_{k}\vec{{\Psi}} \middle|  K_{j}\vec{{\Psi}}
 \right\rangle\\
-2\left\langle  
\dfrac{\partial \vec{\Psi}}{\partial \tau} \middle| K_{j}\dfrac{\partial \vec{\Psi}}{\partial \tau}
\right\rangle
-\left\langle 
\vec{{\Psi}} \middle|
K_{j} P_{\mathcal{D}}^{\perp}\dfrac{\delta E}{\delta\vec{{\Psi}}^{\dagger}}
\right\rangle
+\sum_{k}\lambda_{k}\left\langle \vec{{\Psi}} \middle|
K_{j} P_{\mathcal{D}}^{\perp}K_{k}\vec{{\Psi}} )\right\rangle ,
\end{array}
\end{equation}
and compare with the  right hand side of (\ref{eq:Gdynm}).
By defining the matrix  $A\in \mathbb{C}^{(n_c-n_P) \times (n_c-n_P)}$ and the vector
$\vec{b}\in \mathbb{C}^{n_c-n_P}$ with elements 
\begin{align}
& a_{j-n_P,k-n_P}=
\left\langle
 P_{\mathcal{D}}^{\perp}K_{k}\vec{{\Psi}} \middle|  K_{j}\vec{{\Psi}}
 \right\rangle + \left\langle \vec{{\Psi}} \middle|
K_{j} P_{\mathcal{D}}^{\perp}K_{k}\vec{{\Psi}} )\right\rangle, \label{eq:genA} \\
& b_{j-n_P} =k_j \left( c_j - \left\langle \vec{{\Psi}} \middle|
K_{j}\vec{{\Psi}} )\right\rangle \right) + \nonumber \\
& + \left\langle
 P_{\mathcal{D}}^{\perp}\dfrac{\delta E}{\delta\vec{{\Psi}}^{\dagger}} 
 \middle|
 K_{j}\vec{{\Psi}}
  \right\rangle +
  2\left\langle  
\dfrac{\partial \vec{\Psi}}{\partial \tau} \middle| K_{j}\dfrac{\partial \vec{\Psi}}{\partial \tau}
\right\rangle +
\left\langle 
\vec{{\Psi}} \middle|
K_{j} P_{\mathcal{D}}^{\perp}\dfrac{\delta E}{\delta\vec{{\Psi}}^{\dagger}}
\right\rangle,
\label{eq:genb}
\end{align}
where $j,k=n_P+1,\ldots, n_c$
the Lagrange parameters 
$\vec{\lambda}\in \mathbb{C}^{n_c-n_P}$
are found by solving the linear system
\begin{equation}
A\vec{\lambda} = \vec{b}.
\label{eq:Alambdab}
\end{equation}

The results in (\ref{eq:genA}) and  (\ref{eq:genb}) are the cornerstone of our approach, valid for arbitrary number of components and dimensions. In the sections to come we apply it to specific problems and evaluate the resulting numerical performance. However, we have to restrict ourselves to  a limited number of test cases in one and two dimensions. 

\section{Normalization and rotational constraints for $n$ components in one dimension}
\label{sec:one_dim_multicomp}

Here we consider  one spatial dimension with a constraint on the angular momentum and on the interval $[-\pi,\pi]$ with 
periodic boundary conditions $\vec{\Psi}(-\pi) = \vec{\Psi}(\pi)$, i.e., a ring geometry. From (\ref{eq:energy_functional}) we have the energy
\begin{equation}
E\left(\vec{{\Psi}}\right)=
\int_{-\pi}^{\pi}
\dfrac{\partial \vec{\Psi}^{\dagger}}{\partial x}W
\dfrac{\partial \vec{\Psi}}{\partial x} + 
\vec{\Psi}^{\dagger} V \vec{\Psi}+
\pi\vec{\Psi}^{\dagger}\Gamma\left(\vec{{\Psi}}\right)
\vec{{\Psi}} \, {\rm d}x,\label{eq:energy_functional2}
\end{equation}
and
\begin{equation}
\frac{\delta E(\vec{{\Psi}})}{\delta\vec{{\Psi}}^{\dagger}}=
-W\frac{\partial^{2}\vec{{\Psi}}}{\partial x^{2}} + V\vec{\Psi} +2\pi\Gamma\left(\vec{{\Psi}}\right)\vec{{\Psi}}.
\label{eq:Euler_Lagrange_to_energy_functional2}
\end{equation}
The constraints are 
\begin{equation}
G_{j}=c_{j}-\int_{-\pi}^{\pi}
\vec{{\Psi}}^{\dagger}K_{j}\vec{{\Psi}}{\rm d}x=0, 
j = 1 , \ldots, n+1,
\label{eq:App_A_constraints_on_component_form}
\end{equation}
where
\begin{equation}
\begin{array}{l}
\:K_{1}=\left[\begin{array}{cccc}
1 & 0 & \dots & 0\\
0 & 0 &  & \vdots\\
\vdots &  & \ddots & \vdots\\
0 & \dots & \dots & 0
\end{array}\right],\ldots ,
\:K_{n}=\left[\begin{array}{cccc}
0 & 0 & \dots & 0\\
0 & 0 &  & \vdots\\
\vdots &  & \ddots & \vdots\\
0 & \dots & \dots & 1
\end{array}\right],\\
\vspace{3mm}
K_{n+1}=-i\left[\begin{array}{cccc}
\frac{\partial}{\partial x} & 0 & \dots & 0\\
0 & \frac{\partial}{\partial x} &  & \vdots\\
\vdots &  & \ddots & 0\\
0 & \dots & 0 & \frac{\partial}{\partial x}
\end{array}\right].
\end{array}
\label{eq:Kdef}
\end{equation}
In our general context we have here $n_c=n+1$ and the $K_{j}$ in addition fulfills $K_{n+1}=L\sum_{k<n+1}K_{k}$ for
the (angular-) momentum operator 
\begin{equation}
L=-i\frac{\partial}{\partial x}. \label{eq:definition_angular_momentum_operator}
\end{equation}
For these constraints we have 
$$
\dfrac{\delta G_{j}}{\delta\vec{{\Psi}}^{\dagger}} = -K_j\vec{\Psi}_j, \, j=1, \ldots , n+1.
$$

There are some special cases that are particularly interesting.

We may project all the constraints giving no Lagrange parameters  to determine. However, this requires the constraints to be satisfied, which for nonlinear constraints usually is a non-trivial computational task. For details, see the following subsection.

There are two other special cases  that for our problem setting are more  interesting from an algorithmic point of view and  lends themselves to an efficient algorithm and numerical tests. 

Firstly, it is natural to consider the case of projecting only some or all of the normalization constraints, $1\leq n_P\leq n$, since the projection in practice is just a trivial normalization.

Secondly, we may consider only dynamically damped constraints, $n_P=0$, that will not force any of the constraints to be satisfied except at the stationary point, which we have also experienced can lead to less sensitivity on the initial conditions.

\subsection{Only projection no dynamically damped constraints}
\label{sec:OnlyProj}

Let us first look at the case 
$n_P=n+1$, i.e., we project all the constraints. Here,  (\ref{eq:DFPMproj})  becomes
\begin{equation}
\label{eq:OnlyProjDyn}
\dfrac{\partial^2 \vec{\Psi}}{\partial \tau^2}+
\eta\dfrac{\partial \vec{\Psi}}{\partial \tau}=
-P_{\mathcal{D}}^{\perp}
\left(
\dfrac{\delta E}{\delta\vec{{\Psi}}^{\dagger}}
\right).
\end{equation}
In this case we do not have any unknown Lagrange parameters and we need only to determine the form of the projection $P_{\mathcal{D}}^{\perp}$ in (\ref{eq:PDdef}). This we get from 
 (\ref{eq:calcPAb}), i.e., 
$$
t_{ij} = 
\left\langle
K_i\vec{\Psi} \middle|  
K_j\vec{\Psi}
 \right\rangle , \,
d_{i} = 
-\left\langle 
K_i\vec{\Psi}
 \,  \,
\middle| \vec{\xi}
\right\rangle ,
$$
or in more detail
\begin{align}
\label{eq:Td2}
T =
\left[
\begin{array}{cccc}
 c_{1}   &  &  & \left\langle
 {\Psi}_{1} \middle| L{\Psi}_{n} 
 \right\rangle\\
   &   \ddots  &  & \vdots\\
    &   & c_{n} & \left\langle
 {\Psi}_{n}  \middle| L{\Psi}_{n} 
 \right\rangle\\[2mm]
 \left\langle
 {\Psi}_{1} \middle| L{\Psi}_{n} 
 \right\rangle & \cdots &  \left\langle
 {\Psi}_{n}  \middle| L{\Psi}_{n}  \right\rangle &
 \sum_{l=1}^{n} 
 \left\langle
\Psi_l  \middle|   L^2 \Psi_l
\right\rangle
\end{array}
\right], \\
\vec{d}=-
\left[
\begin{array}{c}
\left\langle 
\Psi_1 \middle|  \xi_1
\right\rangle \\
\vdots \\
\left\langle 
  \Psi_n \middle| \xi_n
\right\rangle \\[2mm]
\sum_{l=1}^{n} 
\left\langle L \Psi_l \middle|   \xi_l 
\right\rangle
\end{array}
\right].
\nonumber
\end{align}
The linear system given by the matrix $T$ and vector $\vec{d}$ in (\ref{eq:Td2}) can be efficiently solved, with order  $n$ operations, using  
sparse Gaussian elimination.

However, the constraints have to be satisfied. This is trivial in the case of normalization constraints since any approximate numerical solution can be normalized without almost any cost. The constraint on the angular momentum is more complicated to satisfy since it's  nonlinear. However, we believe the idea of using only projection is principally interesting and thus we describe a possible formulation. Assume that we have $\vec{\Psi}(x,\tau)$ from the solution of our dynamical system (\ref{eq:OnlyProjDyn}) that is not generally satisfying the constraints. In order to do so we can solve the least norm minimization problem
\begin{equation}
    \min_{\vec{\xi} \in \mathcal{D}(\tau)} 
    \| \vec{\xi} - \vec{\Psi} \|,
\end{equation}
or in more detail 
\begin{align}
  &  \min_{\vec{\xi}} \| \vec{\xi} - \vec{\Psi} \|  \\
 &  \text{subject to }\ 
 c_i - \left\langle
   {\xi}_{i} \middle| {\xi}_{i} \right\rangle =0, i=1, \ldots , n,\\
&  \hspace{1.5cm}  c_{n+1} - \sum_{i=1}^{n} 
\left\langle {\xi}_{i} \middle| L{\xi}_{i}\right\rangle =0.
\end{align}
Finding the optimal $\vec{\xi} (x)$ above is certainly possible but requires an iterative process that will presumably be much more costly in a numerical implementation than the other two approaches described below.  Therefore, we do not expect this to be an efficient algorithm and will not include this approach in the numerical tests.

\subsection{Projection of one or more normalization constraints}
\label{subsec_proj_1_or_more_norm}
 We start by consider the projection of the first  $n_P$  normalization constraints. Then the projection is explicitly given as
$$
 P_{\mathcal{D}}^{\perp}(\vec{\xi})  = 
\vec{\xi}-
\left[
\Psi_1   \langle \Psi_1 \middle|  \xi_1\rangle/c_1 , \Psi_2   \langle \Psi_2 \middle| \xi_2\rangle/c_2, \ldots,  \Psi_{n_P}  \langle \Psi_{n_P} \middle| \xi_{n_P}\rangle/c_{n_P},0, \ldots, 0
\right]^T
$$
and
\begin{equation}
\label{eq:ddtauproj2}
\dfrac{\partial^2 \vec{\Psi}}{\partial \tau^2}+
\eta\dfrac{\partial \vec{\Psi}}{\partial \tau}=-
P_{\mathcal{D}}^{\perp}\left(
\dfrac{\delta E}{\delta\vec{{\Psi}}^{\dagger}}
\right)-
\sum_{j=n_P+1}^{n+1}\lambda_{j}
P_{\mathcal{D}}^{\perp}\left( K_{j}\vec{\Psi} \right).
\end{equation}
We can simplify (\ref{eq:ddtauproj2}) slightly by noting that
$P_{\mathcal{D}}^{\perp}\left( K_{j}\vec{\Psi} \right) = \Psi_j \vec{e}_j, n_P+1 \leq j \leq n$ giving
\begin{equation}
\dfrac{\partial^2 \vec{\Psi}}{\partial \tau^2}+
\eta\dfrac{\partial \vec{\Psi}}{\partial \tau}=-
P_{\mathcal{D}}^{\perp}\left(
\dfrac{\delta E}{\delta\vec{{\Psi}}^{\dagger}}
\right)-
\sum_{j=n_P+1}^{n}\lambda_{j} \Psi_j \vec{e}_j -
\lambda_{n+1}
P_{\mathcal{D}}^{\perp}\left( K_{n+1}\vec{\Psi} \right).
\end{equation}
In the following lemma we give the details necessary to efficiently calculate the Lagrange parameters. For notational convenience we drop the brackets for the projection, i.e., $P_{\mathcal{D}}^{\perp}(\vec{\xi})=P_{\mathcal{D}}^{\perp}\vec{\xi}$.
\begin{lemma}
In (\ref{eq:genA}) and
(\ref{eq:genb}) we have
\[
\left\langle
P_{\mathcal{D}}^{\perp} K_{k}\vec{{\Psi}} \middle|  K_{j}\vec{{\Psi}}
 \right\rangle = 
 \left\langle \vec{{\Psi}} \middle|
K_{j} P_{\mathcal{D}}^{\perp}K_{k}\vec{{\Psi}} \right\rangle =
 \]

\begin{empheq}[left={\empheqlbrace}]{alignat=2}
& \delta_{kj}c_j,
\  n_P+1\leq j,k \leq n, \label{eq:deltajk}\\
& \left\langle {\Psi}_j \middle| L {\Psi}_n \right\rangle , \ k=n+1, \ j= n_P+1, \ldots n  \label{eq:phijLphin}\\
&  \sum_{l=1}^{n} 
 \left\langle
\Psi_l  \middle|   L^2 \Psi_l
\right\rangle
- \sum_{l=1}^{m} 
\left\langle  \Psi_l \middle| L \Psi_l
\right\rangle^2/c_l , \ j=k=n+1 , \label{eq:phijL2phij}
\end{empheq} 

and
\begin{equation}
\label{eq:symrhs}
\left\langle 
 P_{\mathcal{D}}^{\perp}\frac{\delta E}{\delta\vec{{\Psi}}^{\dagger}}\middle|
K_j\vec{{\Psi}}
\right\rangle =
\left\langle 
\vec{{\Psi}} \middle|
K_j P_{\mathcal{D}}^{\perp}\frac{\delta E}{\delta\vec{{\Psi}}^{\dagger}}
\right\rangle = 
\end{equation}
\begin{empheq}
[left={\empheqlbrace}]{alignat=2}
& \left\langle \Psi_{j}\middle|\frac{\delta E}{\delta\overline{\Psi}_j}\right\rangle  , j=n_P+1, \ldots, n, 
\label{eq:phijdEdphij}\\
& \sum_{l=1}^{n} 
 \left\langle
\frac{\delta E}{\delta\overline{\Psi}_l}   \middle|   L \Psi_l
\right\rangle -
\sum_{l=1}^{n_P} 
 \left\langle
 \Psi_l   
\middle| L\Psi_l 
 \right\rangle 
 \left\langle
 \frac{\delta E}{\delta\overline{\Psi}_l}
  \middle|
  \Psi_l 
\right\rangle/c_l , j=n+1. \label{eq:sumphilLdEdphil}
 \end{empheq}

 \end{lemma}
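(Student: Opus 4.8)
The plan is to reduce every claimed identity to a direct computation resting on three structural facts: each normalization matrix $K_1,\dots,K_n$ is Hermitian; the operator $L=-i\partial/\partial x$ is self-adjoint under the periodic boundary conditions on $[-\pi,\pi]$, so that $\langle f|Lg\rangle=\langle Lf|g\rangle$ and in particular $\langle f|Lf\rangle\in\mathbb{R}$; and $K_{n+1}=\mathrm{diag}(L,\dots,L)$ together with the orthogonal projector $P_{\mathcal{D}}^{\perp}$ are self-adjoint as well. The first move is to observe that the two bra-ket forms on the left of each displayed equality are complex conjugates of one another: using self-adjointness of $K_j$, $\langle\vec{\Psi}|K_j P_{\mathcal{D}}^{\perp}X\rangle=\langle K_j\vec{\Psi}|P_{\mathcal{D}}^{\perp}X\rangle=\overline{\langle P_{\mathcal{D}}^{\perp}X|K_j\vec{\Psi}\rangle}$, both for $X=K_k\vec{\Psi}$ (entering (\ref{eq:genA})) and for $X=\delta E/\delta\vec{\Psi}^{\dagger}$ (entering (\ref{eq:genb})). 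Hence it suffices to evaluate one of the two forms and to check that the resulting scalar is real; equality with the other form is then automatic.

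The second step is the explicit projection bookkeeping. From the formula for $P_{\mathcal{D}}^{\perp}$ stated just before the lemma, only the first $n_P$ components are modified, each by subtracting $\Psi_l\langle\Psi_l|\,\cdot\,\rangle/c_l$, and for those indices the projected normalization $\langle\Psi_l|\Psi_l\rangle=c_l$ genuinely holds. For $n_P+1\le k\le n$ we have $K_k\vec{\Psi}=\Psi_k\vec{e}_k$ with $k>n_P$, so $P_{\mathcal{D}}^{\perp}K_k\vec{\Psi}=\Psi_k\vec{e}_k$ is untouched; pairing with $K_j\vec{\Psi}=\Psi_j\vec{e}_j$ and using orthogonality of the component vectors gives $\delta_{kj}\langle\Psi_j|\Psi_j\rangle$, which is (\ref{eq:deltajk}) once $\langle\Psi_j|\Psi_j\rangle$ is identified with its target value $c_j$. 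For $k=n+1$, $K_{n+1}\vec{\Psi}=(L\Psi_1,\dots,L\Psi_n)^{T}$ and the projection subtracts $\Psi_l\langle\Psi_l|L\Psi_l\rangle/c_l$ only for $l\le n_P$; pairing with $K_j\vec{\Psi}=\Psi_j\vec{e}_j$, $j>n_P$, isolates the single untouched component $j$ and yields (\ref{eq:phijLphin}). The diagonal case $j=k=n+1$ produces $\sum_{l=1}^{n}\langle L\Psi_l|L\Psi_l\rangle$ minus the corrections $\langle\Psi_l|L\Psi_l\rangle^{2}/c_l$ from $l\le n_P$; rewriting $\langle L\Psi_l|L\Psi_l\rangle=\langle\Psi_l|L^{2}\Psi_l\rangle$ via self-adjointness of $L$ gives (\ref{eq:phijL2phij}), where the upper summation limit should read $n_P$.

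The energy identity (\ref{eq:symrhs}) is treated identically with $X=\delta E/\delta\vec{\Psi}^{\dagger}$, whose components are $-W_l\partial_x^2\Psi_l+V_l\Psi_l+2\pi(\Gamma\vec{\Psi})_l$. For $j\le n$ the projection leaves component $j$ alone and one reads off (\ref{eq:phijdEdphij}); for $j=n+1$ the same subtraction on $l\le n_P$ produces the two sums of (\ref{eq:sumphilLdEdphil}). The step I expect to be the real obstacle is confirming the reality of the emerging scalars, without which the two conjugate bra-ket forms do not collapse to a single value. For the normalization blocks this is clean: $\langle\Psi_j|\delta E/\delta\bar{\Psi}_j\rangle$ is real because the kinetic piece integrates by parts (periodic boundary) to $W_j\|\partial_x\Psi_j\|^2$, $V_j$ is real, and the interaction contributes $\int|\psi_j|^2\sum_k\gamma_{jk}|\psi_k|^2\,dx$, manifestly real by Hermiticity of $\Gamma$. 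The momentum block is the delicate one: I must check that $\langle\delta E/\delta\vec{\Psi}^{\dagger}|L\vec{\Psi}\rangle$ is real, where the kinetic part is real by self-adjointness of $L$, the interaction part is real after integration by parts using the symmetry $\gamma_{lk}=\gamma_{kl}$ (which forces the offending term to equal its own negative, hence to vanish), and the potential part is real for the rotationally invariant ring configuration (constant or symmetric $V$). The accompanying projection corrections $\langle\Psi_l|L\Psi_l\rangle\langle\Psi_l|\delta E/\delta\bar\Psi_l\rangle/c_l$ are real since each factor is. Carefully tracking which components $P_{\mathcal{D}}^{\perp}$ touches, so that the summation limits in (\ref{eq:phijL2phij}) and (\ref{eq:sumphilLdEdphil}) come out as $n_P$ rather than $n$, is the remaining bookkeeping point.
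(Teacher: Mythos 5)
Your proposal is correct and arrives at all the stated formulas, but it is organized along a genuinely different line than the paper's proof. The paper evaluates the two bra-ket forms $\left\langle P_{\mathcal{D}}^{\perp}X \middle| K_{j}\vec{\Psi}\right\rangle$ and $\left\langle \vec{\Psi}\middle| K_{j}P_{\mathcal{D}}^{\perp}X\right\rangle$ separately in each case and then equates them ``by partial integration''; you instead observe that self-adjointness of $K_j$ makes the two forms complex conjugates of each other, so a single evaluation plus a reality check suffices. The projection bookkeeping that follows is identical in both arguments (and your derived values match the paper's proof, which tacitly corrects two slips in the statement: the upper limit $m$ in (\ref{eq:phijL2phij}) should indeed be $n_P$, as you note, and the value in (\ref{eq:phijLphin}) comes out as $\left\langle \Psi_j \middle| L\Psi_j\right\rangle$, the subscript $n$ there being a typo). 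What your route buys is that it localizes exactly where the lemma is nontrivial: the reality of $\sum_{l}\left\langle \delta E/\delta\overline{\Psi}_l \middle| L\Psi_l\right\rangle$ in the $j=n+1$ case is not automatic. Your decomposition shows the kinetic part is real by self-adjointness of $L$, the interaction part is real only after symmetrization using $\gamma_{lk}=\gamma_{kl}$, since the offending term becomes $\tfrac12\sum_{l,k}\gamma_{lk}\int\left(|\Psi_l|^2|\Psi_k|^2\right)'{\rm d}x=0$ by periodicity, and the potential part contributes $-i\sum_l\int V_l\left(|\Psi_l|^2\right)'{\rm d}x$, which vanishes only when $V$ commutes with $L$, i.e.\ is constant on the ring (true in the paper's tests, where $V=0$); the paper's terse appeal to partial integration passes over this implicit hypothesis entirely, so your version is more honest about the scope of (\ref{eq:sumphilLdEdphil}). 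Two minor caveats: your parenthetical ``symmetric $V$'' is not sufficient on its own (one would also need a symmetric density, so constant $V$ is the clean condition), and your identification of $\left\langle \Psi_j\middle|\Psi_j\right\rangle$ with $c_j$ in (\ref{eq:deltajk}) for the \emph{unprojected} components $j>n_P$ holds only at the constraint set, which is consistent with the paper replacing these entries by the actual inner products in (\ref{eq:specA}) and (\ref{eq:AllDynA}).
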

\begin{proof}
First consider (\ref{eq:deltajk}) and note that 
$K_{k}\vec{{\Psi}} = \Psi_k \vec{e}_k$ and thus $P_{\mathcal{D}}^{\perp} K_{k}\vec{{\Psi}} =  \Psi_k \vec{e}_k$ giving $\left\langle
P_{\mathcal{D}}^{\perp} K_{k}\vec{{\Psi}} \middle|  K_{j}\vec{{\Psi}}
 \right\rangle = \left\langle \Psi_k \vec{e}_k \middle|  \Psi_j \vec{e}_j \right\rangle = \delta_{kj}c_j$. A similar argument proves that $\left\langle \vec{{\Psi}} \middle|
K_{j} P_{\mathcal{D}}^{\perp}K_{k}\vec{{\Psi}} )\right\rangle= \delta_{kj}c_j$.

Turning to  (\ref{eq:phijLphin}) we first note that $\left\langle
P_{\mathcal{D}}^{\perp} K_{n+1}\vec{{\Psi}} \middle|  K_{j}\vec{{\Psi}}
 \right\rangle = \left\langle
P_{\mathcal{D}}^{\perp} K_{n+1}\vec{{\Psi}} \middle|  \Psi_j \vec{e}_j
 \right\rangle  = \left\langle
L{\Psi}_j \middle|  \Psi_j
 \right\rangle$ and similarly 
 $\left\langle \vec{{\Psi}} \middle|
K_{j} P_{\mathcal{D}}^{\perp}K_{k}\vec{{\Psi}} \right\rangle = 
\left\langle \vec{{\Psi}} \middle|
K_{j} P_{\mathcal{D}}^{\perp}K_{n+1}\vec{{\Psi}} \right\rangle= \left\langle \vec{{\Psi}} \middle|
L\Psi_j \vec{e}_j \right\rangle =  \left\langle \Psi_j   \middle|
L\Psi_j \right\rangle$. By partial integration we get   $\left\langle
L{\Psi}_j \middle|  \Psi_j \right\rangle = \left\langle \Psi_j   \middle|
L\Psi_j \right\rangle$.

To prove (\ref{eq:phijL2phij}) we look in detail on 
\begin{align}
\left\langle
P_{\mathcal{D}}^{\perp} K_{n+1}\vec{{\Psi}} \middle|  K_{n+1}\vec{{\Psi}}
 \right\rangle =\\
 \sum_{l=1}^{n_P} 
 \left\langle
(L \Psi_l- \Psi_l  \langle \Psi_1 \middle|   L \Psi_l
\rangle/c_l
\middle| L\Psi_l 
 \right\rangle + 
 \sum_{l=n_P+1}^{n} 
 \left\langle
L \Psi_l  \middle|   L \Psi_l
\right\rangle
=\\
  \sum_{l=1}^{n} 
 \left\langle
L \Psi_l  \middle|   L \Psi_l
\right\rangle
- \sum_{l=1}^{n_P} 
\left\langle  \Psi_l \middle| L \Psi_l
\right\rangle^2/c_l,
\end{align} 
and
\begin{align}
 \left\langle \vec{{\Psi}} \middle|
K_{n+1} P_{\mathcal{D}}^{\perp}K_{n+1}\vec{{\Psi}} \right\rangle =\\
 \sum_{l=1}^{n_P} 
 \left\langle
 \Psi_l \middle| 
L(L \Psi_l- \Psi_l  
\left\langle \Psi_1 \middle|   L \Psi_l
\right\rangle/c_l)
\right\rangle+ 
 \sum_{l=n_P+1}^{n} 
 \left\langle
 \Psi_l  \middle|   L^2 \Psi_l
\right\rangle
=\\
  \sum_{l=1}^{n} 
 \left\langle
\Psi_l  \middle|   L^2 \Psi_l
\right\rangle
- \sum_{l=1}^{n_P} 
\left\langle  \Psi_l \middle| L \Psi_l
\right\rangle^2/c_l .
\end{align} 
Using partial integration we have $\left\langle
 L\Psi_l \middle|  L\Psi_l
 \right\rangle =   \left\langle
 \Psi_l \middle|  L^2\Psi_l
 \right\rangle$ showing the result.
 
 In (\ref{eq:phijdEdphij}) we again notice that $K_{j}\vec{{\Psi}} = \Psi_j \vec{e}_j$ and $P_{\mathcal{D}}^{{\perp}^{\dagger}}K_{j}\vec{{\Psi}} = \Psi_j \vec{e}_j$ so it remains to prove  that $\left\langle \Psi_{j}\middle|\frac{\delta E}{\delta\overline{\Psi}_j}\right\rangle = \left\langle \frac{\delta E}{\delta\overline{\Psi}_j} \middle| \Psi_{j} \right\rangle$ which can be done by integration by parts.
 
Finally we prove  (\ref{eq:sumphilLdEdphil}) by first looking at
\begin{align}
\left\langle 
 P_{\mathcal{D}}^{\perp}\frac{\delta E}{\delta\vec{{\Psi}}^{\dagger}}\middle|
K_{n+1}\vec{{\Psi}} 
\right\rangle=
 \sum_{l=1}^{n_P} 
 \left\langle
\frac{\delta E}{\delta\overline{\Psi}_l} - \Psi_l  \left\langle \Psi_l \middle|  \frac{\delta E}{\delta \overline{\Psi}_l}
\right\rangle/c_l
\middle| L\Psi_l 
 \right\rangle + 
 \sum_{l=n_P+1}^{n} 
 \left\langle
\frac{\delta E}{\delta\overline{\Psi}_l}   \middle|   L \Psi_l
\right\rangle=\\
\sum_{l=1}^{n} 
 \left\langle
\frac{\delta E}{\delta\overline{\Psi}_l}  \middle|   L \Psi_l
\right\rangle
-
\sum_{l=1}^{n_P} 
 \left\langle
 \Psi_l   
\middle| L\Psi_l 
 \right\rangle 
 \left\langle
  \Psi_l 
  \middle|
  \frac{\delta E}{\delta\overline{\Psi}_l}/c_l
\right\rangle,
\end{align}
and then
\begin{align}
\left\langle 
\vec{{\Psi}} \middle|
K_{n+1} P_{\mathcal{D}}^{\perp}\frac{\delta E}{\delta\vec{{\Psi}}^{\dagger}}
\right\rangle=
 \sum_{l=1}^{m} 
 \left\langle
 \Psi_l \middle| L
\frac{\delta E}{\delta\overline{\Psi}_l}- L\Psi_l  
\left\langle \Psi_l \middle|  \frac{\delta E}{\delta\overline{\Psi}_l}
\right\rangle
 \right\rangle + 
 \sum_{l=n_P+1}^{n} 
 \left\langle
 \Psi_l \middle| L
\frac{\delta E}{\delta\overline{\Psi}_l} 
\right\rangle=\\
\sum_{l=1}^{n} 
 \left\langle
 \Psi_l \middle| L
\frac{\delta E}{\delta\overline{\Psi}_l}  
\right\rangle-
 \sum_{l=1}^{n_P} 
 \left\langle
 \Psi_l \middle| 
\frac{\delta E}{\delta\overline{\Psi}_l}
\right\rangle
\left\langle
\Psi_l \middle|
L\Psi_l  
 \right\rangle/c_l.
\end{align}
By partial integration we get (\ref{eq:sumphilLdEdphil}).
\end{proof}
Using the lemma above it is easy to see that 
in  (\ref{eq:genA}) and (\ref{eq:genb}) we have
\begin{align}
A =
2
\left[
\begin{array}{cccc}
  \left\langle
 {\Psi}_{n_P+1} \middle| {\Psi}_{n_P+1} 
 \right\rangle   &  &  & \left\langle
 {\Psi}_{n_P+1} \middle| L{\Psi}_{n} 
 \right\rangle\\
   &   \ddots  &  & \vdots\\
    &   & \left\langle
 {\Psi}_{n} \middle| {\Psi}_{n} 
 \right\rangle & \left\langle
 {\Psi}_{n}  \middle| L{\Psi}_{n} 
 \right\rangle\\
 \left\langle
 {\Psi}_{n_P+1} \middle| L{\Psi}_{n} 
 \right\rangle & \cdots &  \left\langle
 {\Psi}_{n}  \middle| L{\Psi}_{n}  \right\rangle &
 \sum_{l=1}^{n} 
 \left\langle
\Psi_l  \middle|   L^2 \Psi_l
\right\rangle
- \sum_{l=1}^{n_P} 
\left\langle  \Psi_l \middle| L \Psi_l
\right\rangle^2/c_l
\end{array}\right], \label{eq:specA} \\[5mm]
\vec{b}=
2
\left[
\begin{array}{c}
k_{n_P+1}G_{n_P+1}/2+
\left\langle 
\Psi_{n_P+1}\middle|\dfrac{\delta E}{\delta\overline{\Psi}_{1}}\right\rangle -
\left\langle  
\dfrac{\partial \Psi_1}{\partial \tau} \middle| \dfrac{\partial \Psi_1}{\partial \tau}
\right\rangle \\
\vdots \\
k_{n}G_{n}/2+
\left\langle 
\Psi_{n}\middle|\dfrac{\delta E}{\delta\overline{\Psi}_n}\right\rangle -\left\langle  
\dfrac{\partial \Psi_n}{\partial \tau} \middle| \dfrac{\partial \Psi_n}{\partial \tau}
\right\rangle \\
k_{n+1}G_{n+1}/2+ \sum_{l=1}^{n} 
 \left\langle
\dfrac{\delta E}{\delta\overline{\Psi}_l}   \middle|   L \Psi_l
\right\rangle-
 \left\langle  
\dfrac{\partial \Psi_l}{\partial \tau} \middle| L\dfrac{\partial \Psi_l}{\partial \tau}
\right\rangle
-
\sum_{l=1}^{n_P} 
 \left\langle
 \Psi_l   
\middle| L\Psi_l 
 \right\rangle 
 \left\langle
 \dfrac{\delta E}{\delta\overline{\Psi}_l}  
  \middle|
  \Psi_l 
\right\rangle/c_l
\end{array}
\right].
\label{eq:specb}
\end{align}

The system (\ref{eq:Alambdab}) with the matrix in (\ref{eq:specA}) can be solved efficiently using sparse Gaussian elimination with the number of operations of order $n-n_p$.

\subsection{Projection of all normalization constraints}
\label{sec:ProjAllNorm}

 Let us consider  $n_P=n$, i.e., we project all normalization constraints but not the last constraint on the angular momentum. This will give only one unknown Lagrange parameter and in practice a trivial projection of the normalization constraints $\Psi_j=\sqrt{c_j}\Psi_j/\|\Psi_j\|$. 
The projection is explicitly given as
\begin{equation}
 P_{\mathcal{D}}^{\perp}(\vec{\xi})  = 
\vec{\xi}-
\left[
\Psi_1   \langle \Psi_1 \middle|  \xi_1\rangle/c_1 , \Psi_2   \langle \Psi_2 \middle| \xi_2\rangle/c_2, \ldots,  \Psi_n  \langle \Psi_n \middle| \xi_n\rangle/c_n
\right]^T,
\end{equation}
and
\begin{equation}
\dfrac{\partial^2 \vec{\Psi}}{\partial \tau^2}+
\eta\dfrac{\partial \vec{\Psi}}{\partial \tau}=-
P_{\mathcal{D}}^{\perp}\left(
\dfrac{\delta E}{\delta\vec{{\Psi}}^{\dagger}}
\right)-\lambda_{n+1}
P_{\mathcal{D}}^{\perp}\left( K_{n+1}\vec{\Psi} \right).
\end{equation}
The Lagrange parameter is 
\begin{equation} \label{eq:lambda_ell_for_projected_norms}
\lambda_{n+1}=
\dfrac{
k_{n+1}G_{n+1}/2+ \sum_{l=1}^{n} 
 \left\langle
\dfrac{\delta E}{\delta\overline{\Psi}_l}   \middle|   L \Psi_l
\right\rangle
-
 \left\langle
 \Psi_l   
\middle| L\Psi_l 
 \right\rangle 
 \left\langle
 \dfrac{\delta E}{\delta\overline{\Psi}_l}  
  \middle|
  \Psi_l 
\right\rangle  /c_l
-
 \left\langle  
\dfrac{\partial \Psi_l}{\partial \tau} \middle| L\dfrac{\partial \Psi_l}{\partial \tau}
\right\rangle
}
{
\sum_{l=1}^{n} 
 \left\langle
\Psi_l  \middle|   L^2 \Psi_l
\right\rangle
- 
\left\langle  \Psi_l \middle| L \Psi_l
\right\rangle^2/c_l
}.
\end{equation}

\subsection{Only dynamically damped constraints}
\label{sec:OnlyDamp}

No projection means that $P_{\mathcal{D}}^{\perp} = I $ in (\ref{eq:DFPMproj}), $n_P=0$, and the number of unknown Lagrange parameters is $n+1$ and thus 
\begin{equation}
\dfrac{\partial^2 \vec{\Psi}}{\partial \tau^2}+
\eta\dfrac{\partial \vec{\Psi}}{\partial \tau}=-
\dfrac{\delta E}{\delta\vec{{\Psi}}^{\dagger}}-\sum_{j=1}^{n+1}\lambda_{j}\dfrac{\delta G_{j}}{\delta\vec{{\Psi}}^{\dagger}},
\end{equation}
where
$$
\dfrac{\delta G_{j}}{\delta\vec{{\Psi}}^{\dagger}} = -\Psi_j\vec{e}_j, j=1, \ldots, n, 
\dfrac{\delta G_{n+1}}{\delta\vec{{\Psi}}^{\dagger}} = -\sum_{j=1}^{n}L\Psi_j\vec{e}_j.
$$
The Lagrange parameters are found by solving the $(n+1)\times (n+1)$ linear system $A\vec{\lambda}= \vec{b}$ where we from (\ref{eq:specA}) and (\ref{eq:specb}) have
\begin{align}
\label{eq:AllDynA}
A =
\left[
\begin{array}{cccc}
\left\langle
 {\Psi}_{1} \middle| {\Psi}_{1} 
 \right\rangle  &  &  & \left\langle
 {\Psi}_{1} \middle| {\Psi}_{n} 
 \right\rangle\\
   &   \ddots  &  & \vdots\\
    &   & \left\langle
 {\Psi}_{n} \middle| {\Psi}_{n} 
 \right\rangle & \left\langle
 {\Psi}_{n}  \middle| L{\Psi}_{n} 
 \right\rangle\\
 \left\langle
 {\Psi}_{1} \middle| L{\Psi}_{n} 
 \right\rangle & \cdots &  \left\langle
 {\Psi}_{n}  \middle| L{\Psi}_{n}  \right\rangle &
 \sum_{l=1}^{n} 
 \left\langle
\Psi_l  \middle|   L^2 \Psi_l
\right\rangle
\end{array}\right],  \\[5mm]
\vec{b}=
\left[
\begin{array}{c}
k_{1}G_{1}/2+
\left\langle 
\Psi_{1}\middle|\dfrac{\delta E}{\delta\overline{\Psi}_1}\right\rangle -
\left\langle  
\dfrac{\partial \Psi_1}{\partial \tau} \middle| \dfrac{\partial \Psi_1}{\partial \tau}
\right\rangle \\
\vdots \\
k_{n}G_{n}/2+
\left\langle 
\Psi_{n}\middle|\dfrac{\delta E}{\delta\overline{\Psi}_n}\right\rangle -\left\langle  
\dfrac{\partial \Psi_n}{\partial \tau} \middle| \dfrac{\partial \Psi_n}{\partial \tau}
\right\rangle \\[3mm]
k_{n+1}G_{n+1}/2+ \sum_{l=1}^{n} 
 \left\langle
\dfrac{\delta E}{\delta\overline{\Psi}_l}   \middle|   L \Psi_l
\right\rangle-
 \left\langle  
\dfrac{\partial \Psi_l}{\partial \tau} \middle| L\dfrac{\partial \Psi_l}{\partial \tau}
\right\rangle
\end{array}
\right].
\label{eq:AllDynb}
\end{align}
For $n=1$  we have
\begin{align}
\label{eq:B_10_in_EJP_2020_upper}
A =
\left[
\begin{array}{cc}
\left\langle
 {\Psi}_{1} \middle| {\Psi}_{1} 
 \right\rangle   &   \left\langle
 {\Psi}_{1} \middle| L{\Psi}_{1} 
 \right\rangle\\
 \left\langle
 {\Psi}_{1} \middle| L{\Psi}_{1} 
 \right\rangle & 
 \left\langle
\Psi_1  \middle|   L^2 \Psi_1
\right\rangle
\end{array}\right],  \\[5mm]
\vec{b}=
\left[
\begin{array}{c}
k_{1}G_{1}/2+
\left\langle 
\Psi_{1}\middle|\dfrac{\delta E}{\delta\overline{\Psi}_1}\right\rangle -
\left\langle  
\dfrac{\partial \Psi_1}{\partial \tau} \middle| \dfrac{\partial \Psi_1}{\partial \tau}
\right\rangle \\
k_{2}G_{2}/2+
 \left\langle
\dfrac{\delta E}{\delta\overline{\Psi}_1}   \middle|   L \Psi_1
\right\rangle-
 \left\langle  
\dfrac{\partial \Psi_1}{\partial \tau} \middle| L\dfrac{\partial \Psi_1}{\partial \tau}
\right\rangle
\end{array}
\right],
\label{eq:B_10_in_EJP_2020_lower}
\end{align}
which agree with (B.10) in \cite{OgrenGulliksson2020}.

\section{Numerical tests}
\label{sec:Numerical}

In this section we present numerical calculations for some representative cases in 1D and 2D.

We want to solve (\ref{eq:DFPMproj}) through an iterative method in artificial time and start by writing (\ref{eq:DFPMproj}) as the first order system \\ \\
\begin{equation}
\begin{aligned}
&\frac{\partial  \vec{\Psi}}{\partial \tau} =  \vec{\Phi},\\
&\frac{\partial\vec{\Phi}}{\partial \tau} = -\eta\vec{\Phi}+ \vec{F}(\vec{\Psi}), 
\end{aligned}
\label{eq:num_test_sys}
\end{equation}
where
\begin{equation}
\vec{F}(\vec{\Psi})=
-P_{\mathcal{D}}^{\perp}
\left(
\dfrac{\delta E}{\delta\vec{{\Psi}}^{\dagger}}
\right)-
\sum_{j=n_P+1}^{n_c}\lambda_{j}
P_{\mathcal{D}}^{\perp}\left(
\dfrac{\delta G_{j}}{\delta\vec{{\Psi}}^{\dagger}}
\right).
\label{eq:num_test_F}
\end{equation}
We assume that we have $N$ discretization points in space and store the numerical values of 
$\vec{\Psi},\vec{\Phi},\vec{F}$ at iteration $k$ as matrices $\mathbf{U}, \mathbf{V},  \mathbf{F}\in \mathbb{R}^{N\times n}$, respectively.
Then for some initial conditions $\mathbf{U}_0,\mathbf{V}_0$  we can apply the symplectic Euler algorithm, see \cite{HairerLubichWanner2006}, to~(\ref{eq:num_test_sys}) to get the following semi-implicit method
\begin{equation}
\begin{aligned}
&\mathbf{V}_{k+1} = (I - \eta \, \Delta \tau)\mathbf{V}_k + \Delta \tau\,  \mathbf{F}_k, \\
&\mathbf{U}_{k+1} = \mathbf{U}_k + \Delta \tau\,  \mathbf{V}_{k+1} \:,\, k = 0,1,... ,
\end{aligned}
\end{equation}
where $\Delta \tau$ is the  step in artificial time. Note that the symplectic Euler above has a dual version where $\mathbf{U}_{k}$ is updated before $\mathbf{V}_{k}$ but with the same numerical properties (convergence and efficiency) \cite{HairerLubichWanner2006}.

\begin{figure}[h]
\includegraphics[scale=0.26]{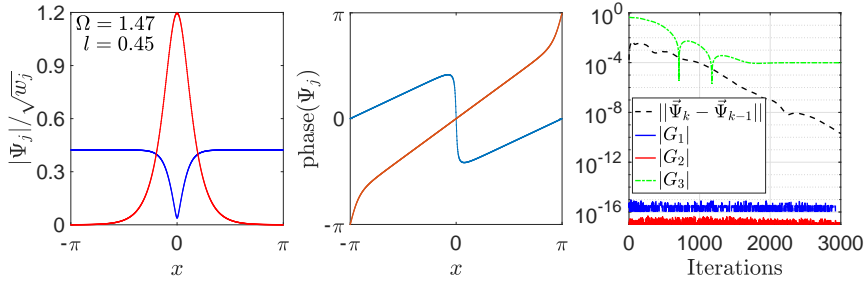}
\caption{(Color online) Stationary solution of~(\ref{eq:Num_1D_2_comp}),~(\ref{eq:lambda_3_for_1D_2-comp}) with projection for the normalization constants, see Sec.~\ref{sec:proj_all_norm}.
Modulus (left) and phases (mid), of $\Psi_j(x), \: j=1,2$. Compare with FIG. 4 (most right) of~\cite{WuZaremba2013}. Convergence (right) for $\vec{\Psi}$ and the constraint $G_3$ of (\ref{eq:G1G2_only_angularmomentum}), see inset legend. We have used the numerical parameters $\eta=k_3=2$ and $\Delta \tau=0.99\Delta x \simeq 6.2 \cdot 10^{-3}$. Physical parameters are set to $\gamma=\gamma_{11}=\gamma_{12}=\gamma_{22}=23$, $c_1=0.96$, $c_2=0.04$, and $c_3=0.45$, which is the same as in \cite{WuZaremba2013}.} \label{fig:first_1D}
\end{figure}

\subsection{Multicomponent NLSE in 1D under rotation in a ring, applications to vector solitons}

In these examples we want to find normalized solutions $\vec{\Psi} (x)$ of (\ref{eq:Euler_Lagrange_to_energy_functional2}) where $W = \text{diag}(1, \ldots, 1)$ and $V = \text{diag}(0, \ldots, 0)$, i.e.,
\begin{equation}
\frac{\delta E(\vec{{\Psi}})}{\delta\vec{{\Psi}}^{\dagger}}=
-\frac{\partial^{2}\vec{{\Psi}}}{\partial x^{2}} +2\pi\Gamma\left(\vec{{\Psi}}\right)\vec{{\Psi}} = 0,
\label{eq:Euler_Lagrange_to_energy_functional2_W_1_V_0}
\end{equation}
with a constrained angular momentum.

\subsubsection{Projection of all the normalization constraints}
\label{sec:proj_all_norm}

From Eqs.~(\ref{eq:App_A_constraints_on_component_form}),~(\ref{eq:Kdef}), and~(\ref{eq:definition_angular_momentum_operator}) we are left with one dynamic constraint for the angular momentum
\begin{equation}
\label{eq:G1G2_only_angularmomentum}
G_{n+1}  =c_{n+1} + i \sum_{j=1}^n\int\limits_{-\pi}^{\pi} \overline{\Psi}_j  \frac{\partial}{\partial x}\Psi_j
\, {\rm d}x  =0.
\end{equation}
We obtain the Lagrange parameter $\lambda_{n+1}$ in each time step for the angular momentum from~(\ref{eq:lambda_ell_for_projected_norms}),
with $L$ from (\ref{eq:definition_angular_momentum_operator}) and $\delta E(\vec{{\Psi}}) / \delta\vec{{\Psi}}^{\dagger}$ from (\ref{eq:Euler_Lagrange_to_energy_functional2_W_1_V_0}).
The normalization of each component, $\int \left| \Psi_j \right|^2 dx=c_j$, is here fulfilled by (trivial) projection,
$\Psi_j = \sqrt{c_j} \Psi_j/\| \Psi_j \|$,
in each time step.

\subsubsection{Only dynamically damped constraints}
\label{sec:dyn_damp_constr}
From Eqs.~(\ref{eq:App_A_constraints_on_component_form}),~(\ref{eq:Kdef}), and~(\ref{eq:definition_angular_momentum_operator}) we end up with $n+1$ dynamic constraints 
\begin{equation}
\label{eq:G1G2}
G_{j}=c_{j}-\int\limits_{-\pi}^{\pi} \left| \Psi_j \right|^2
\, {\rm d}x  =0,\ j=1,\: ..., n, \
G_{n+1}  =c_{n+1} + i \sum_{j=1}^n\int\limits_{-\pi}^{\pi} \overline{\Psi}_j  \frac{\partial}{\partial x}\Psi_j
\, {\rm d}x  =0.
\end{equation}
We obtain the corresponding Lagrange parameters $\vec{\lambda}$ from the linear system given by the matrix $A$ in~(\ref{eq:AllDynA}) and the vector $\vec{b}$ in~(\ref{eq:AllDynb}) by solving $A\vec{\lambda} = \vec{b}$ in each time step.

\subsubsection{Numerical benchmarking for $n=2$ components}
In the special case of two components there is a rich literature on applications in for example optics and cold atomic gases, see, e.g.,~\cite{WuZaremba2013,RoussouEtAl2018, SandinOgrenGulliksson2016} and references therein.

For the ease of the readers familiar with those applications,
we here explicitly write the corresponding dynamic equations~(\ref{eq:num_test_sys}) and~(\ref{eq:num_test_F}) for the stationary equation (\ref{eq:Euler_Lagrange_to_energy_functional2_W_1_V_0}) in the case of two components. 
With $\vec{\Psi} = \left[ \Psi_1, \Psi_2 \right]^T$ and $\vec{\Phi} = \left[ \partial \Psi_1 / \partial \tau, \partial \Psi_2 / \partial \tau\right]^T$, we have
$$
\frac{\partial^{2}}{\partial x^{2}}
\left[
\begin{array}{c}
\Psi_1 \\
\Psi_2 
\end{array}
\right]
 -
2 \pi \left[
\begin{array}{cc}
\gamma_{11} \left|
 \Psi_1
\right|^2 & \gamma_{12} \Psi_1 \overline{\Psi}_2 \\
\gamma_{21} \overline{\Psi}_1 \Psi_2& \gamma_{22} \left|
 \Psi_2
\right|^2 
\end{array}
\right]
\left[
\begin{array}{c}
\Psi_1 \\
\Psi_2 
\end{array}
\right]
$$
\begin{equation}
    \label{eq:Num_1D_2_comp}
+ \lambda_1
\left[
\begin{array}{c}
\Psi_1 \\
0 
\end{array}
\right]
+ \lambda_2
\left[
\begin{array}{c}
0\\
\Psi_2 
\end{array}
\right]
+ \lambda_3 L
\left[
\begin{array}{c}
 \Psi_1 \\
 \Psi_2 
\end{array}
\right]=0,
\end{equation}
where $\lambda_1=\lambda_2=0$, and~(\ref{eq:lambda_ell_for_projected_norms})
\begin{equation}
\lambda_3= 
\dfrac{
k_{3}G_{3}/2 + \sum_{l=1}^{2} 
 \left\langle
\dfrac{\delta E}{\delta\overline{\Psi}_l}   \middle|   L \Psi_l
\right\rangle
-
 \left\langle
 \Psi_l   
\middle| L\Psi_l 
 \right\rangle 
 \left\langle
 \dfrac{\delta E}{\delta\overline{\Psi}_l}  
  \middle|
  \Psi_l 
\right\rangle  /c_l
-
 \left\langle  
\Phi_l \middle| L\Phi_l 
\right\rangle
}
{
\sum_{l=1}^{2} 
 \left\langle
\Psi_l  \middle|   L^2 \Psi_l
\right\rangle
- 
\left\langle  \Psi_l \middle| L \Psi_l
\right\rangle^2/c_l
},
\label{eq:lambda_3_for_1D_2-comp}
\end{equation}
for the case of projecting the two normalization constraints.
The case of only dynamically damped constraints, the three Lagrange parameters are in each time step given from the linear system~(\ref{eq:AllDynA}),~(\ref{eq:AllDynb})
$$
\left[
\begin{array}{ccc}
 \left\langle {\Psi}_{1} \middle| {\Psi}_{1} \right\rangle  &  0 & \left\langle {\Psi}_{1} \middle| L{\Psi}_{1} \right\rangle  \\
0 &   \left\langle {\Psi}_{2} \middle| {\Psi}_{2} \right\rangle  & \left\langle {\Psi}_{2} \middle| L{\Psi}_{2} \right\rangle  \\
 \left\langle {\Psi}_{1} \middle| L{\Psi}_{1} \right\rangle 
&   \left\langle
 {\Psi}_{2} \middle| L{\Psi}_{2} 
 \right\rangle 
& \left\langle {\Psi}_{1} \middle| L^2{\Psi}_{1} \right\rangle  + \left\langle {\Psi}_{2} \middle| L^2{\Psi}_{2} \right\rangle
\end{array}
\right]
\left[
\begin{array}{c}
\lambda_1 \\
\lambda_2 \\
\lambda_3 
\end{array}
\right]
$$
\begin{equation}
=
\left[
\begin{array}{c}
k_1 G_1 /2 +
\left\langle 
\Psi_{1}\middle|\dfrac{\delta E}{\delta\overline{\Psi}_1}\right\rangle -
\|  \Phi_1 \| ^2  \\
k_2 G_2 /2 +
\left\langle 
\Psi_{2}\middle|\dfrac{\delta E}{\delta\overline{\Psi}_2}\right\rangle -
\|  \Phi_2 \| ^2  \\
k_3 G_3 /2 +
\left\langle \Psi_{1}\middle|\dfrac{\delta E}{\delta\overline{\Psi}_1}\right\rangle + \left\langle \Psi_{2}\middle|\dfrac{\delta E}{\delta\overline{\Psi}_2}\right\rangle -
\left\langle  \Phi_1 \middle| L \Phi_1 \right\rangle  + \left\langle  \Phi_2 \middle| L \Phi_2 \right\rangle 
\end{array}
\right].
\label{eq:3_3_system_for_1D_2-comp}
\end{equation}
Above $L$ is the angular momentum operator~(\ref{eq:definition_angular_momentum_operator}), $G_1$, $G_2$, $G_3$ are taken from~(\ref{eq:G1G2}), and the functional derivatives of the energy are given by~(\ref{eq:Euler_Lagrange_to_energy_functional2_W_1_V_0}).

To compare our numerical results in 1D with the literature, we first reproduced some well-known solutions with~(\ref{eq:Num_1D_2_comp}), which can be given analytically for certain parameter values~\cite{WuZaremba2013}.
We perform the calculations with both the dynamic methods presented here for obtaining the Lagrange parameters.
We present the numerical results for the case
of projecting all the normalization constraints, described 
in Sec.~\ref{sec:proj_all_norm}, in Fig.~\ref{fig:first_1D},
and for the case
of all the constraints being dynamic, described 
in Sec.~\ref{sec:dyn_damp_constr}, in Fig.~\ref{fig:first_1D_2}.

Note that even if the physical quantities turns out the same, left and mid plots of Figs.~\ref{fig:first_1D} and~\ref{fig:first_1D_2}, the convergence for the constraints seen in the right plots are qualitatively different.
In order to reach even higher accuracy for the angular momentum constraint $G_3$, one need a smaller $\Delta x$.

\begin{figure}[h]
\includegraphics[scale=0.26]{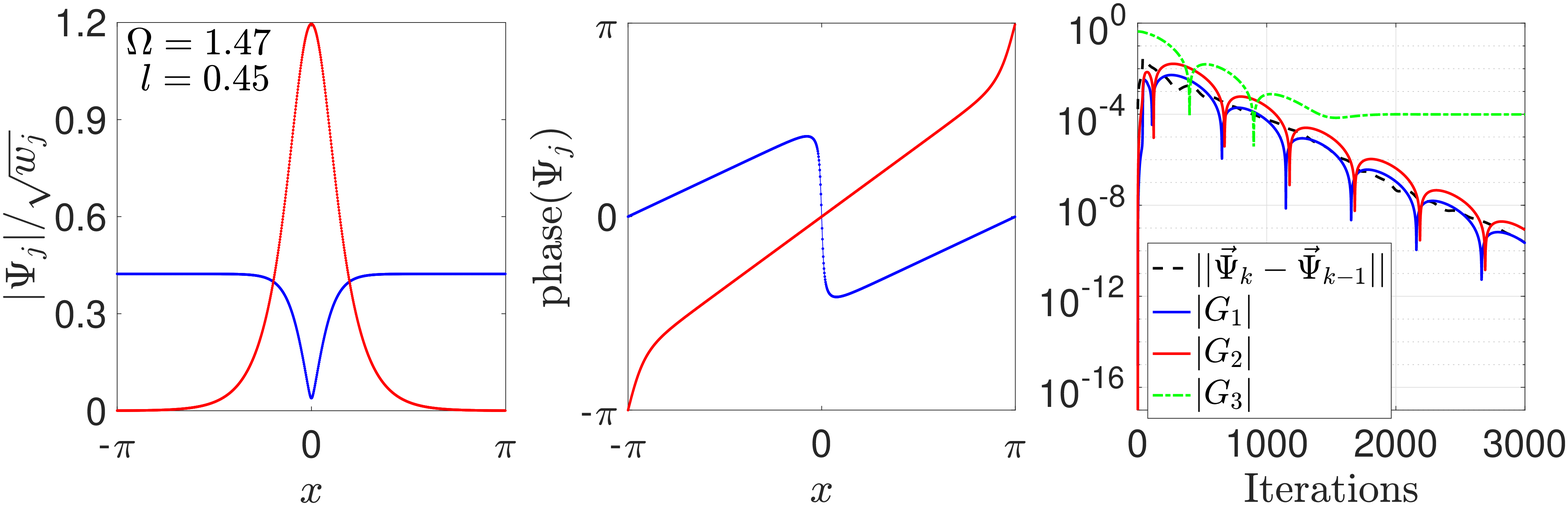}\caption{(Color online) Stationary solution of~(\ref{eq:Num_1D_2_comp}),~(\ref{eq:3_3_system_for_1D_2-comp}) with fully dynamic constraints, see Sec.~\ref{sec:dyn_damp_constr}. Everything is the same as for Fig.~\ref{fig:first_1D}, except that  we here have $3$ damping constants for the $3$ dynamic constraints~(\ref{eq:G1G2}), i.e., $\eta=k_1=k_2=k_3=2$, and a qualitatively different convergence of the constraints, $G_1, G_2$, see the right plot.}
\label{fig:first_1D_2}
\end{figure}

\subsubsection{Computational complexity for $n \leq 10$ components}

Theoretical and experimental studies of physical systems described by the NLSE with more than two components is a rapidly growing field, see, e.g.,~\cite{GeneralNsolitonFeng2014, LannigEtAl2020} and references therein.
In this light, efficient numerical methods to treat such systems for any physical parameters, as we have formulated in this article, are timely.
In this section we evaluate numerically the computational complexity of the presented dynamic methods with up to ten components. 

If one or several components represent a very small part of the total density, they will not  influence the other components much, which may result in numerical results not showing general behaviour of our method but rather numerical artifacts.
Hence, given that we want to test up to $n \leq 10$ components numerically here, we set a lower bound for each density to $1/(2n)$, such that no component will contain less than $5 \%$ of the total density.

Another possible qualitative problem when evaluating different number of components with similar parameter values that we want to decrease, is that many aspects of the solutions are dictated by number theoretical effects, as have been discussed for $n=2$ components in~\cite{RoussouEtAl2018}.
Therefore we add to component $k$ a term $a^k$ from a geometric series with $n$ terms and sum $1-n \cdot 1/(2n)  = 1/2$, such that
$\sum_{k=1}^{n} c_k = \sum_{k=1}^{n} 1/(2n) + a^k =1$, i.e., $0 < a <1$ is a (real) solution to $a^{n+1} + 3a/2 + 1/2 = 0$.
Then we obtain for example: $c_1=1$ for $n=1$; $c_1=0.6160, \: c_2=0.3840$ for $n=2$; and $ c_1=0.5092, \: c_2=0.2840, \: c_3=0.2068$ for $n=3$.

The strength of the parameters for the nonlinear coupling were here set to $\gamma_{ii}=10$ and  $\gamma_{ij}=5, \: i \neq j$ (which do not give spatial separations for the different components). 
 
Initial conditions used for the calculations in all Figs.~\ref{fig:first_1D},~\ref{fig:first_1D_2} and~\ref{second_1D_figure} are $\Psi_j(x,0)=$ $\sqrt{c_j/(4\pi)}(1 + \exp(i x))$, for each component $j=1, 2, \ …, n$, such that $\langle \vec{\Psi} \mid \vec{\Psi} \rangle=1$ and $\langle \vec{\Psi} \mid L \vec{\Psi} \rangle=0.5$ for $\tau=0$.
As is natural, we observed that the convergence is faster for calculations with $\ell_0$ close to the initial value $\ell = 0.5$, but the number of iterations that we use to calculate the computational complexity is the average of the $9$ different values $\ell_0=0.1, \ 0.2, \ …, 0.9$. 
For one case of the data reported in the green curve of Fig.~\ref{second_1D_figure}, we did not reach any convergence at all for the above described choices, which also represents a specific indication of our general obtained experience that the \textit{SHAKE and RATTLE} \cite{AndersenRattle1983,McLachlanModinVerdierWilkins2014} method is further more sensitive to the initial conditions than the methods with dynamic constraints. 
Instead, for this case we both reduced $\Delta \tau=0.9 \Delta x$ and used $\Psi_2(x,0)=\sqrt{c_2/(4\pi)} (1 + \exp(1.1 i x))$ in order to reach convergence.

In addition to the obtained (averaged over $c_{n+1}=\ell_0$) number of iterations $k$ needed for the norm $ || \vec{\Psi}_k - \vec{\Psi}_{k-1}  || < 10^{-6}$, we multiplied with an overall factor $n$ describing the complexity in setting up the equations to obtain the Lagrange parameters of each method, see Eqs.~(\ref{eq:lambda_ell_for_projected_norms}), and~(\ref{eq:AllDynA}),~(\ref{eq:AllDynb}), respectively. 
For the \textit{SHAKE and RATTLE} we use an additional factor $4$ coming from the additional (half-) step in updating the numerical solution together with $3$ Newton iterations, which was the observed minimum number of iterations in the inner solver for the Lagrange parameters.  
The time step in use for the benchmarking in Fig.~\ref{second_1D_figure},  is the seemingly maximal possible  $ \Delta \tau < \Delta x $ (we set $\Delta \tau=0.99 \Delta x$) However, we have not been able to justify this theoretically.
The damping parameters are in all cases set to $\eta=k_j=2$, which gives good performance, but we stress that we did not perform any individual optimization of the damping parameters for each case. 
For the \textit{SHAKE and RATTLE} method \cite{AndersenRattle1983,McLachlanModinVerdierWilkins2014} one needs in addition initial conditions for the Lagrange parameters, which we set to $\lambda_j=1$ with $j=1,2$ for $n=1$, and $j=1,2,3$ for $n=2$.
The substantially larger errorbars for some of the $n$-values in Fig.~\ref{second_1D_figure}, is due  to number theoretical effects~\cite{RoussouEtAl2018}, even if we have chosen parameters as to reduce such effects.
\begin{figure}[h]
\begin{center}
\includegraphics[scale=0.4]{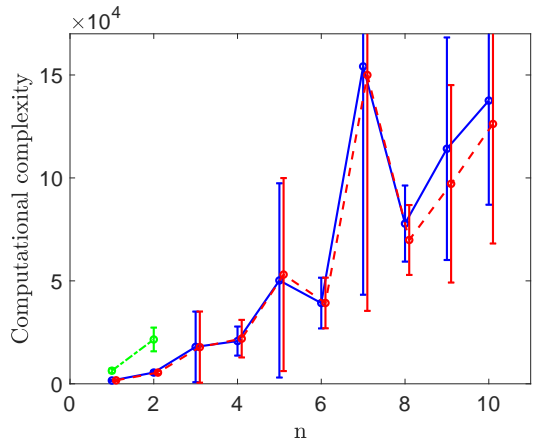}
\end{center}
\caption{(Color online) Benchmarking of the computational complexity for constrained $n$-component solutions of (\ref{eq:Euler_Lagrange_to_energy_functional2_W_1_V_0}).
The solid (blue)  data is for \textit{Only dynamically damped constraints} (Sec. \ref{sec:dyn_damp_constr}), dashed (red) data (slightly translated to the right of the integer values $n$ for improved visibility) is for \textit{Projection of all the normalization constraints} (Sec.~\ref{sec:proj_all_norm}), dashed-dotted (green)  data is for \textit{SHAKE and RATTLE} \cite{McLachlanModinVerdierWilkins2014} with available algorithms for $n=1$ and $n=2$ taken from \cite{SandinOgrenGulliksson2016}.  
The computational complexity for each method has been calculated by averaging over different $\ell$ values as described in the main text, and the mean (curves) and standard deviation (half the total length of the errorbars) are formed from the number of iterations until convergence, multiplied with different factors describing the three different algorithms under study here, see main text.}
\label{second_1D_figure}
\end{figure}

\subsection{NLSE in 2D under rotation in a harmonic oscillator potential, applications to vortex structures}

Theoretical and experimental studies of quantized vortices in, e.g., liquid Helium, superconductors, and cold atomic gases, have been  undertaken for many years.
Solving the NLSE under rotation is a common but non-trivial such example, see, e.g.,~\cite{AntoineDuboscg2014,JacksonPRA2004, LinnPRA2001, KavoulakisMottelsonPethick2000, ButtsRokhsar1999, BaoMarkowich2005} and references therein.

In the 2D examples presented here we want to find  $\Psi (x_1,x_2)$ where
\begin{equation}
\label{eq:oscEnergy}
E\left(\Psi\right)=
\frac{1}{2}\int\limits_{\mathbb{R}^2}
\left|
\dfrac{\partial \Psi}{\partial x_1}
\right|^2+
\left|
\dfrac{\partial \Psi}{\partial x_2}
\right|^2+
 \left( x_1^{2} + x_2^{2} \right)\left| \Psi \right|^2+
 g\left|
 \Psi
\right|^4 \, {\rm d}x_1 {\rm d}x_2 ,
\end{equation}
gives the stationary equation
\begin{equation}
    \label{eq:NumRot}
\frac{\delta E(\Psi)}{\delta \Psi^{\dagger}}=
-\dfrac{1}{2}\left(\frac{\partial^{2}\Psi}{\partial x_1^{2}} +\frac{\partial^{2}\Psi}{\partial x_2^{2}}\right)+
\dfrac{1}{2}\left( x_1^{2} + x_2^{2} \right)\Psi+
g\left|
 \Psi
\right|^2\Psi =0.
\end{equation}
The dynamic constraints are 
\begin{equation}
\label{eq:G1G2_2D}
G_{1}=c_{1}-\int\limits_{\mathbb{R}^2} \left| \Psi \right|^2
\, {\rm d}x_1 {\rm d}x_2 =0,\
G_{2}  =c_{2}-\int\limits_{\mathbb{R}^2} \Psi^{\dagger} L \Psi
\, {\rm d}x_1 {\rm d}x_2 =0,
\end{equation}
with the angular momentum operator
\begin{equation}
L = -i(x_1\frac{\partial }{\partial x_2} -x_2\frac{\partial }{\partial x_1}) .
\end{equation}
We can generally attain the corresponding Lagrange parameters $\vec{\lambda}$ from the linear system given by the matrix $A$ in~(\ref{eq:AllDynA}) and the vector $b$ in~(\ref{eq:AllDynb}) by solving $A\vec{\lambda} = b$.
\vspace{40mm}

\begin{figure}[h]
\begin{center}
\includegraphics[scale=0.75]{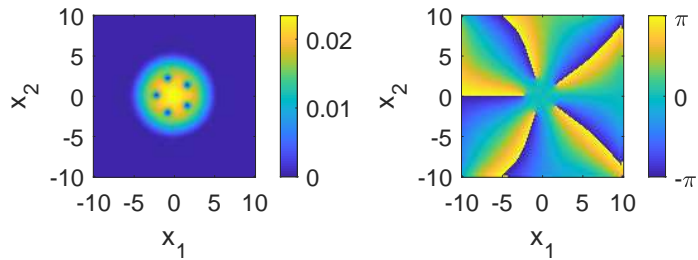}
\end{center}
\caption{(Color online) Stationary solution of (\ref{eq:DFPM_2D_NLSE_with_rotation}),
density (left), and phase (right), of $\Psi (x_1, x_2)$.  
Compare with Figure 1 of \cite{AntoineDuboscg2014}.
We used the damping $\eta=4$ and $\Delta \tau=0.1$.
\label{fig:first_2D_figure}}
\end{figure}

\subsubsection{The groundstate of a rapidly rotating Bose-Einstein condensate}

To compare our numerical results in 2D with the literature, we first reproduced \textit{"A simple but complete example"} from 
the dimensionless GPE in Sec.~6 of~\cite{AntoineDuboscg2014}. 
In this example the rotational velocity
$\Omega$ is kept constant, so we used the following dynamical reformulation
of (\ref{eq:NumRot})
\begin{equation}
\frac{\partial^{2}\Psi}{\partial\tau^{2}}+\eta\frac{\partial\Psi}{\partial\tau}=\frac{1}{2}\left(\frac{\partial^{2}\Psi}{\partial x_{1}^{2}}+\frac{\partial^{2}\Psi}{\partial x_{2}^{2}}\right)-\frac{1}{2}\left(x_{1}^{2}+x_{2}^{2}\right)\Psi-500\left|\Psi\right|^{2}\Psi+\lambda_{1}\Psi+\lambda_{2}L\Psi,\label{eq:DFPM_2D_NLSE_with_rotation}
\end{equation}
where $\lambda_{2}=\Omega$ (constant), $\lambda_{1}=\mu=E_{\textnormal{{rot}}}+\int\int 250\left|\Psi\right|^{4} + \Psi^\dagger \left( \frac{\partial^{2}\Psi}{\partial\tau^{2}}+\eta\frac{\partial\Psi}{\partial\tau} \right) dx_{1}dx_{2}$,
and with projection $\Psi=\Psi/\left\Vert \Psi\right\Vert $ in each
time step to fulfill the normalization constraint. 
Here $E_{\textnormal{{rot}}}$
is the energy in the rotating frame

\begin{equation}
E_{\textnormal{{rot}}}\left(\Omega\right)=\int_{\mathbb{R}^2}\frac{1}{2}\left|\nabla\Psi\right|^{2}+\frac{1}{2}\left(x_{1}^{2}+x_{2}^{2}\right)\left|\Psi\right|^{2}+250\left|\Psi\right|^{4}-\Omega\Psi^{*}L\Psi \, dx_{1}dx_{2}.\label{eq:2D_energy_in_the_rotating_frame}
\end{equation}
As in sec. 6 of \cite{AntoineDuboscg2014} we use a spatial grid $-10\leq x_{1},x_{2}\leq10$
with $129$ points in each dimension, and  the initial
condition {[}Eq. (3.23) of \cite{AntoineDuboscg2014}, with $\gamma_{x}=\gamma_{y}=1${]},

\begin{equation}
\Psi\left(x_{1},x_{2},0\right)=\frac{\left[\left(1-\Omega\right)+\Omega\left(x_{1}+ix_{2}\right)\right]\exp\left(-x_{1}^{2}/2-x_{2}^{2}/2\right)}{\left\Vert \left[\left(1-\Omega\right)+\Omega\left(x_{1}+ix_{2}\right)\right]\exp\left(-x_{1}^{2}/2-x_{2}^{2}/2\right)\right\Vert }.\label{eq:2D_initial_condition_Bao}
\end{equation}

Performing the numerical calculation solving (\ref{eq:DFPM_2D_NLSE_with_rotation}), we obtain for $\Omega=0.5$
a stationary solution with a density $\rho=\left|\Psi\right|^{2}$,
which is 5-fold symmetric, depicted in Fig. \ref{fig:first_2D_figure}.

\begin{figure}[h!]
\begin{center}
\includegraphics[scale=0.75]{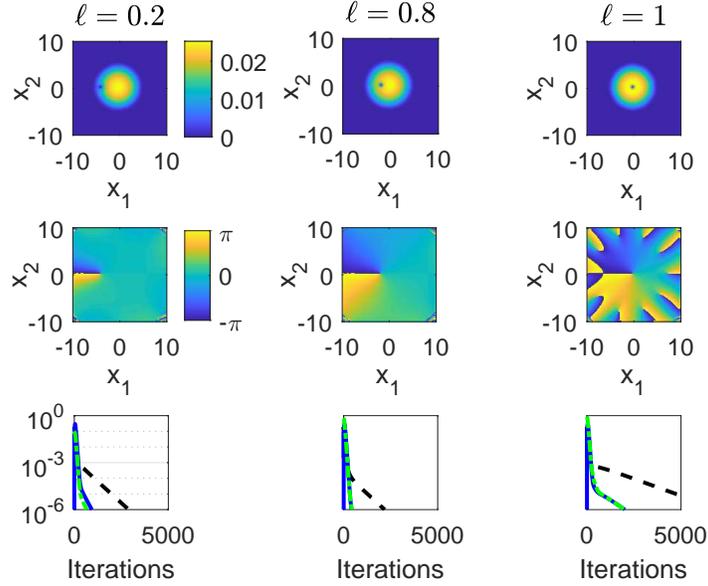}
\end{center}
\caption{(Color online) Nucleation of a central vortex studied with dynamical constraints.
The upper panels shows the density $\rho=\left|\Psi (x_1,x_2) \right|^{2}$ for
angular momentum $c_2 = \ell=0.2,\:0.8,\:1$, and the middle panels show
the phase of $\Psi$, for stationary solutions of (\ref{eq:DFPM_2D_NLSE_with_rotation})
with $\mu$ and $\Omega$ being Lagrange parameters. In the lower
panels the corresponding numerical convergence of the DFPM and the
dynamical constraints are illustrated. The (black) dashed curve shows
$\left\Vert \Psi_{k}-\Psi_{k-1}\right\Vert $ for iteration step $k$,
the (blue) solid curve shows the normalization constraint $\left|G_{1}\left(\Psi_{k}\right)\right|$,
and the (green) dotted-dashed curve shows the angular momentum constraint
$\left|G_{2}\left(\Psi_{k}\right)\right|$. For this figure we used
the numerical parameters $\eta=k_{1}=k_{2}=2$ and $ \Delta \tau=0.05$.
\label{fig:second_2D_figure}}
\end{figure}

\subsubsection{Transition of a non rotating Bose-Einstein condensate forming a central
vortex}
After the first test of DFPM without dynamical constraints above,
we now test the method with two dynamical constraints on a well-known
example, the transition of a distant vortex approaching the center
$\left(x_{1},x_{2}\right)=\left(0,0\right)$ for $0<\ell\leq1$. Compare our
Fig. \ref{fig:second_2D_figure}, for example with Figure 3 in \cite{ButtsRokhsar1999},
or Fig. 3 in \cite{KavoulakisMottelsonPethick2000}. This means we are now using
Eq. (\ref{eq:DFPM_2D_NLSE_with_rotation}) with the two Lagrange multipliers
evaluated dynamically, corresponding to Eqs.~(\ref{eq:B_10_in_EJP_2020_upper}),~(\ref{eq:B_10_in_EJP_2020_lower}), and the same initial condition, Eq. (\ref{eq:2D_initial_condition_Bao})
with $\Omega=0.1$, for all the three different constrained $\ell$-values
in Fig.~\ref{fig:second_2D_figure}. The densities (upper plots) and
phases (middle) of the stationary solution to (\ref{eq:DFPM_2D_NLSE_with_rotation})
behaves as expected, and the convergence of the iteration and the
constraints (\ref{eq:G1G2_2D}) are shown in the lower plots.

In conclusion we find the presented method with dynamic constraints an effective tool to calculate
vortex structures in 2D with no assumptions on the strength of the
nonlinearity or the form of the external potential. 
For example we
have also successfully tested to calculate giant vortices for an additional quartic term
in the potential~\cite{JacksonPRA2004}, and rotating states for an asymmetric potential~\cite{LinnPRA2001}.
We have
kept the damping parameters and the time step constant throughout
this subsection to illustrate the robustness of the method. However,
we noted that the numerical performence can be substantially improved
by finetuning those numerical parameters.

\section{Conclusions}
\label{sec:Conclusions}

We have set up a general dynamical formulation  Eqs.~(\ref{eq:DFPMproj}),~(\ref{eq:Gdynm}),~(\ref{eq:genA}),~(\ref{eq:genb}) for obtaining stationary solutions for nonlinear equations with $n_c$ global constraints, where $0 \leq n_P \leq n_c$ of them are projected and $n_c - n_P$ of them are included in the (extended) dynamical formulation.

We presented detailed derivations for the case of rotation in 1D system, with examples of  treating the constraints only with projection, Sec.~\ref{sec:OnlyProj}, projection of some, Sec.~\ref{subsec_proj_1_or_more_norm}, and all, Sec.~\ref{sec:ProjAllNorm}, normalization constraints, and finally with only dynamic constraints, Sec.~\ref{sec:OnlyDamp}.

In the final part of the article we have given detailed formulae for some 1D, Figs.~\ref{fig:first_1D}, and~\ref{fig:first_1D_2}, and 2D, Figs.~\ref{fig:first_2D_figure}, and~\ref{fig:second_2D_figure}, examples that we have evaluated numerically and compared to the literature.
We also evaluated the computational performance numerically with up to ten components, Fig.~\ref{second_1D_figure}.

In conclusion, the dynamical formulation of complicated global constraints have the following main advantages:

\begin{itemize}
\item Simpler implementations
\item Less computational complexity 
\item Less sensitivity to initial conditions
\end{itemize}

We do not report any disadvantages with the methods studied. 
It may appear as that we need to chose additional suitable numerical parameter values (the $k_j$:s in (\ref{eq:Gdynm})) for the damping of the constraints.
However, with projection methods one needs instead to chose suitable initial values for the Lagrange parameters.   

\bibliographystyle{unsrt}

\end{document}